\setlist[itemize]{leftmargin=8pt,labelindent=0pt}
\newtheorem{remark}{Remark}
\theoremstyle{definition}
\newtheorem{definition}{Definition}[subsection]
  \providecommand\BibTeX{{%
    \normalfont B\kern-0.5em{\scshape i\kern-0.25em b}\kern-0.8em\TeX}}}
\begin{document}

\title[Diagnosing and Curing Fairness Pathologies in Cross-Domain Recommendation]{The Double-Edged Sword of Knowledge Transfer: Diagnosing and Curing Fairness Pathologies in Cross-Domain Recommendation}

\author{Yuhan Zhao}
\affiliation{
  \institution{Hong Kong Baptist University}
  \city{Hong Kong}
  \country{China}
}
\email{csyhzhao@comp.hkbu.edu.hk}
\authornote{Equal contributions.}

\author{Weixin Chen}
\affiliation{
  \institution{Hong Kong Baptist University}
  \city{Hong Kong}
  \country{China}
}
\email{cswxchen@comp.hkbu.edu.hk}
\authornotemark[1] 

\author{Li Chen}
\affiliation{
  \institution{Hong Kong Baptist University}
  \city{Hong Kong}
  \country{China}
}
\email{lichen@comp.hkbu.edu.hk}

\author{Weike Pan}
\affiliation{
  \institution{Shenzhen University}
  \city{Shenzhen}
  \country{China}
}
\email{panweike@szu.edu.cn}

\begin{abstract}
Cross-domain recommendation (CDR) offers an effective strategy for improving recommendation quality in a target domain by leveraging auxiliary signals from source domains. Nonetheless, emerging evidence shows that CDR can inadvertently heighten group-level unfairness. In this work, we conduct a comprehensive theoretical and empirical analysis to uncover why these fairness issues arise. Specifically, we identify two key challenges: (i) \textbf{Cross-Domain Disparity Transfer}, wherein existing group-level disparities in the source domain are systematically propagated to the target domain; and (ii) \textbf{Unfairness from Cross-Domain Information Gain}, where the benefits derived from cross-domain knowledge are unevenly allocated among distinct groups.

To address these two challenges, we propose a Cross-Domain Fairness Augmentation (CDFA) framework composed of two key components. Firstly, it mitigates cross-domain disparity transfer by adaptively integrating unlabeled data to equilibrate the informativeness of training signals across groups. Secondly, it redistributes cross-domain information gains via an information-theoretic approach to ensure equitable benefit allocation across groups. Extensive experiments on multiple datasets and baselines demonstrate that our framework significantly reduces unfairness in CDR without sacrificing overall recommendation performance, while even enhancing it. Our source code is publicly available at \url{https://github.com/weixinchen98/CDFA}.

\end{abstract}

\begin{CCSXML}
<ccs2012>
   <concept>
       <concept_id>10002951</concept_id>
       <concept_desc>Information systems</concept_desc>
       <concept_significance>500</concept_significance>
       </concept>
   <concept>
       <concept_id>10002951.10003317.10003347.10003350</concept_id>
       <concept_desc>Information systems~Recommender systems</concept_desc>
       <concept_significance>500</concept_significance>
       </concept>
 </ccs2012>
\end{CCSXML}
\ccsdesc[500]{Information systems}
\ccsdesc[500]{Information systems~Recommender systems}

\keywords{Fairness, Cross-domain recommendation}
\maketitle

\section{Introduction}
Cross-domain recommendation (CDR) has emerged as a promising paradigm for enhancing recommendation quality in a target domain by leveraging auxiliary information from source domains~\cite{LCL24,XLW22,ZWC21,ZZZ23}. Despite its proven effectiveness, recent studies have highlighted notable fairness concerns, particularly in the form of exacerbated performance disparities among user groups defined by sensitive attributes~\cite{wu2025faircdr,chen2025leave,tang2024fairness, chen2025investigating}. Consequently, an increasing body of work has begun devising fairness-aware CDR strategies aimed at mitigating these group-level inequities. For instance, Tang et al.~\cite{tang2024fairness} were the first to identify that certain sensitive attributes can lead to biased outcomes in CDR settings. Likewise, VUG~\cite{chen2025leave} showed that users overlapping between domains tend to receive the bulk of advantages, while non-overlapping users gain little or even suffer deteriorated performance.

However, as a relatively nascent research area, current work addresses fairness issues without thoroughly probing the underlying mechanisms and distinctive characteristics of fairness in the CDR process. This gap leaves a fundamental question unanswered: \textit{\textbf{How do fairness issues arise in CDR?}}

To answer this question, we conduct systematic theoretical analyses and empirical studies. Our findings indicate that, under certain domains and data distributions, CDR does not necessarily exacerbate fairness issues and can even mitigate them. However, such favorable outcomes hinge on highly restricted conditions. In general, CDR presents fairness challenges distinct from those found in single-domain scenarios for two principal reasons, both stemming from its core mechanism of cross-domain information transmission:
\begin{itemize}
    \item \textbf{Cross-Domain Disparity Transfer}. CDR systems can inadvertently propagate pre-existing group disparities from the source to the target domain. For instance, if the source domain exhibits a pronounced performance gap favoring one demographic group over another, the transfer process may systematically inherit and reinforce these biases in the target domain. 
    \item \textbf{Unfairness from Cross-Domain Information Gain}. The transfer and fusion of information across domains can yield additional insights, but these benefits are not uniformly distributed across all groups. For instance, rural users may have limited exposure to delivery and ride-hailing services, resulting in less cross-domain information gain, whereas urban users benefit from richer data. Consequently, CDR can exacerbate information asymmetries, intensifying unfairness.
\end{itemize}

To address these challenges, we propose a novel framework, termed \textbf{C}ross-\textbf{D}omain \textbf{F}airness \textbf{A}ugmentation (CDFA), which comprises two key components. Firstly, to mitigate cross-domain disparity transfer, we design a data enhancement strategy. Exploiting CDR’s ability to jointly utilize source and target domain information, we select different unlabeled data (also referred to as user uninteracted items) as negative samples for different groups. Specifically, we introduce an estimator based on a fairness-related loss function to adaptively estimate the current fairness gap. Then, according to this result, we incorporate more informative unlabeled data as negative samples into the disadvantaged group’s training set. The magnitude of enhancement is proportional to the fairness gap, ensuring that information is balanced such that bias propagation is minimized. Secondly, to counteract unfairness arising from cross-domain information fusion, we introduce a cross-domain gain redistribution mechanism. We begin by quantifying the information gain from cross-domain learning based on information theory and derive an expression that is implementable via neural networks. We then redistribute these gains to achieve a more balanced improvement. Notably, our approach does not require modifications to specific knowledge-transfer components across domains, making it model-agnostic. This design choice ensures broad applicability to various CDR methods, allowing for improved fairness performance without sacrificing accuracy. Our principal contributions are outlined as follows:
\begin{itemize}
    \item To our knowledge, we are the first to provide a rigorous theoretical and empirical analysis uncovering two distinctive causes of fairness issues in CDR. 
    \item We propose a novel unlabeled data enhancement strategy that augments the information available to disadvantaged groups during cross-domain knowledge transfer.
    \item  We propose a cross-domain gain redistribution technique to address the unfairness. By analyzing and modeling cross-domain information gain from an information-theoretic perspective, we can effectively redistribute these gains to ensure equitable benefit allocation across different groups.
    \item Through extensive experimentation on multiple datasets and baselines, we demonstrate that our method effectively alleviates fairness concerns in CDR without sacrificing—and often improving—overall recommendation performance.
\end{itemize}

\section{Theoretical Analysis of Fairness Violations in CDR}
\label{sec:theory}

\newcommand{\Z}{\mathcal{Z}}
\newcommand{\X}{\mathcal{X}}
\newcommand{\Y}{\mathcal{Y}}
\newcommand{\U}{\mathcal{U}}
\newcommand{\G}{\mathcal{G}}
\newcommand{\D}{\mathcal{D}}
\newcommand{\R}{\mathbb{R}}
\newcommand{\E}{\mathbb{E}}
\newcommand{\Var}{\mathrm{Var}}
\newcommand{\Cov}{\mathrm{Cov}}
\newcommand{\1}{\mathbb{1}}

\subsection{Notation}

\begin{definition}[User-Oriented Group Fairness (UGF) Measure]
Let $Y \in \Y$ denote the recommendation output, and let $\G$ represent user groups defined by a sensitive attribute (e.g., gender, age). Fix a metric $d_\Y$ on $\Y$. The UGF measure is given by:
\begin{align}
\Gamma_{\text{UGF}}(\G, Y) := \sup_{o \in \mathcal{O}} \left| \mathbb{E}[o(Y) \mid \G=g_0] - \mathbb{E}[o(Y) \mid \G=g_1] \right|,
\end{align}
where $\mathcal{O} = \{ o: \mathcal{Y} \to \mathbb{R} \mid \mathrm{Lip}_{d_\Y}(o) \leq L_o, \, \|o\|_\infty \leq B \}$ is the set of bounded $L_o$-Lipschitz functions on $(\Y, d_\Y)$. In practice, $o$ can be a metric evaluating recommendation quality (e.g., NDCG or Hit Rate), provided it is well-defined on $\Y$ and satisfies the Lipschitz constraint with respect to $d_\Y$.
\end{definition}
Intuitively, UGF quantifies performance disparities across groups; a smaller value indicates improved system fairness. It is worth mentioning that UGF can be extended to multiple groups through summation, which makes it possible for our subsequent conclusions based on UGF to be extended to multi-group fairness.

\begin{definition}[Cross-Domain Recommendation System]
A cross-domain recommendation (CDR) system comprises:
\begin{align*}
&\text{Source domain: } \D_s = (\X_s, \mathcal{U}_s, \mathcal{I}_s, \G), \\
&\text{Target domain: } \D_t = (\X_t, \mathcal{U}_t, \mathcal{I}_t, \G), \\
&\text{Representation map: } \phi: \X_s \cup \X_t \to \Z, \\
&\text{Target predictor: } f_t: \Z \to \Y,
\end{align*}
where the subscript $s$ indicates source domain and $t$ target domain. Here, $\X_*$ denotes interactions, $\mathcal{U}_*$ denotes users, and $\mathcal{I}_*$ denotes items. We equip $\Z$ with a metric $d_\Z$ and assume $\phi$ is measurable.
\end{definition}
We focus on general CDR scenarios in which a subset of users overlap between the source and target domains, with the primary objective being to improve performance in the target domain—this is the prevailing use case in CDR~\cite{ZWC21, ZLZY23}. Since the target domain is our main focus, we adopt the convention that any variable not explicitly marked with $t$ (target) or $s$ (source) refers to the target domain for ease of exposition. For distributions on representations, we write $\nu_t := P(\phi(X_t))$, $\nu_s := P(\phi(X_s))$, and group-conditional distributions $\nu_t^g := P(\phi(X_t) \mid \G=g)$, $\nu_s^g := P(\phi(X_s) \mid \G=g)$.

\subsection{Cross-Domain Disparity Transfer}

We leverage the Kantorovich–Rubinstein duality (Wasserstein-1 on $(\Z, d_\Z)$) together with Lipschitz composition to derive an explicit upper bound on the target-domain UGF.

\begin{theorem}[Upper Bound under Domain Shift and Group Imbalance]
Assume $f_t: (\Z, d_\Z) \to (\Y, d_\Y)$ is $L_f$-Lipschitz and $o: (\Y, d_\Y) \to \R$ is $L_o$-Lipschitz and bounded by $B$. Then the target-domain group fairness satisfies
\begin{align}
\Gamma_{\mathrm{UGF}}^t \;\leq\; L_o L_f \, W_1(\nu_t^0, \nu_t^1).
\end{align}
Moreover, by the triangle inequality,
\begin{align}
W_1(\nu_t^0, \nu_t^1) \;\leq\; W_1(\nu_t^0, \nu_s^0) + W_1(\nu_s^0, \nu_s^1) + W_1(\nu_s^1, \nu_t^1),
\end{align}
and with the abbreviations
\begin{align}
\Delta_{ts} := W_1(\nu_t, \nu_s), \quad \delta_t^g := W_1(\nu_t^g, \nu_t), \quad \delta_s^g := W_1(\nu_s^g, \nu_s),
\end{align}
we have the bound
\begin{align}
W_1(\nu_t^g, \nu_s^g) \;\leq\; \delta_t^g + \Delta_{ts} + \delta_s^g, \quad g \in \{0,1\}.
\end{align}
Consequently,
\begin{align}
\Gamma_{\mathrm{UGF}}^t \;\leq\; L_o L_f \Big( W_1(\nu_s^0, \nu_s^1) + \delta_t^0 + \delta_t^1 + \delta_s^0 + \delta_s^1 + 2\Delta_{ts} \Big).
\label{eq:ugf_upper_bound}
\end{align}
\end{theorem}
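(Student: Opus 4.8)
The plan is to assemble the bound from three ingredients: (i) a Kantorovich–Rubinstein step that relates the UGF gap to a Wasserstein-1 distance on representations, (ii) a Lipschitz-composition argument that lets the constants $L_o$ and $L_f$ factor out, and (iii) two applications of the triangle inequality for $W_1$ to decompose the target gap through the source domain. For step (i)–(ii), observe that for any admissible test function $o$, the map $o \circ f_t : (\Z, d_\Z) \to \R$ is Lipschitz with constant at most $L_o L_f$ (composition of Lipschitz maps), so $\tfrac{1}{L_o L_f}(o\circ f_t)$ lies in the unit-Lipschitz ball used in the Kantorovich–Rubinstein dual formula. Writing the UGF summand as $\bigl|\E[o(Y)\mid \G=0] - \E[o(Y)\mid \G=1]\bigr| = \bigl|\E_{\nu_t^0}[o\circ f_t] - \E_{\nu_t^1}[o\circ f_t]\bigr|$ and taking the supremum over $o \in \mathcal O$, this quantity is upper bounded by $L_o L_f \sup_{\|h\|_{\mathrm{Lip}}\le 1}\bigl|\E_{\nu_t^0}[h] - \E_{\nu_t^1}[h]\bigr| = L_o L_f\, W_1(\nu_t^0,\nu_t^1)$, which is the first displayed inequality. (One should note that the dual supremum over the full $1$-Lipschitz ball dominates the supremum over the image class $\{o\circ f_t\}$, so the inequality is in the correct direction even though $o\circ f_t$ need not range over all $1$-Lipschitz functions.)

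Next I would handle the decomposition of $W_1(\nu_t^0,\nu_t^1)$. The metric-space triangle inequality for $W_1$ (valid since $W_1$ is a genuine metric on probability measures with finite first moment on $(\Z,d_\Z)$) gives $W_1(\nu_t^0,\nu_t^1) \le W_1(\nu_t^0,\nu_s^0) + W_1(\nu_s^0,\nu_s^1) + W_1(\nu_s^1,\nu_t^1)$, the second displayed inequality. For the per-group term, apply the triangle inequality again through the marginals: $W_1(\nu_t^g,\nu_s^g) \le W_1(\nu_t^g,\nu_t) + W_1(\nu_t,\nu_s) + W_1(\nu_s,\nu_s^g) = \delta_t^g + \Delta_{ts} + \delta_s^g$, which is the fourth displayed inequality. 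Substituting the bounds for $g=0$ and $g=1$ into the three-term split, and then feeding the result into the Kantorovich–Rubinstein bound from step (i)–(ii), yields
\begin{align*}
\Gamma_{\mathrm{UGF}}^t \le L_o L_f\Bigl( W_1(\nu_s^0,\nu_s^1) + \delta_t^0 + \delta_t^1 + \delta_s^0 + \delta_s^1 + 2\Delta_{ts}\Bigr),
\end{align*}
which is exactly Eq.~\eqref{eq:ugf_upper_bound}.

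The only genuinely delicate point — and the one I would treat most carefully — is the direction of the inequality in the Kantorovich–Rubinstein step: the function class $\mathcal O$ (and hence the class $\{\,o\circ f_t : o\in\mathcal O\,\}$) is strictly smaller than the unit-Lipschitz ball appearing in the $W_1$ dual, so we get an upper bound on $\Gamma_{\mathrm{UGF}}^t$ rather than an identity, which is all the theorem claims. The remaining integrability hypotheses (finite first moments of $\nu_t^g,\nu_s^g$ so that $W_1$ is finite and the triangle inequality applies, and boundedness of $o$ ensuring the conditional expectations are well defined) are mild and follow from the boundedness assumption $\|o\|_\infty \le B$ together with the standing measurability of $\phi$; I would state them explicitly but not belabor them. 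Everything else is a routine chaining of Lipschitz composition and the metric axioms for $W_1$.
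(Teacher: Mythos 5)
Your proposal is correct and follows essentially the same route as the paper's own proof in Appendix~\ref{proof1}: Lipschitz composition of $o$ and $f_t$, the Kantorovich--Rubinstein dual bound $|\E_\mu[h]-\E_\nu[h]|\le \mathrm{Lip}(h)\,W_1(\mu,\nu)$ applied to the group-conditional representation laws, and two applications of the $W_1$ triangle inequality. Your added remarks on the direction of the dual inequality and on integrability are sound refinements of the same argument rather than a different approach.
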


\begin{proof}[Proof sketch]
Compose Lipschitz constants $L_o$ and $L_f$, then apply the Kantorovich–Rubinstein duality to obtain the core bound $\Gamma_{\mathrm{UGF}}^t \leq L_o L_f W_1(\nu_t^0, \nu_t^1)$; use the triangle inequality to decompose $W_1$ across source and target groups; introduce $\Delta_{ts}$, $\delta_t^g$, $\delta_s^g$ and bound $W_1(\nu_t^g, \nu_s^g) \leq \delta_t^g + \Delta_{ts} + \delta_s^g$; combine to derive Eq.~\eqref{eq:ugf_upper_bound}. Full proof appears in Appendix~\ref{proof1}.
\end{proof}

\begin{remark}
The bound \eqref{eq:ugf_upper_bound} separates the contributions from source-group disparity $W_1(\nu_s^0, \nu_s^1)$ and cross-domain shifts $(\delta_t^0 + \delta_t^1 + \delta_s^0 + \delta_s^1 + 2\Delta_{ts})$. This bound indicates that source domain group imbalance can inflate the target-domain fairness upper bound.
\end{remark}

\subsection{Unfairness from Cross-Domain Information Gain}


\begin{theorem}[Uniform Convergence for Group-wise Gain Gaps]
Let $\mathcal{H} := \{ h = o \circ f_t \circ \phi : o \in \mathcal{O} \}$ and assume $\|h\|_\infty \leq B$ for all $h \in \mathcal{H}$. Define the gain functional class $\mathcal{H}_{\mathrm{gain}} := \{ h_1 - h_0 : h_1, h_0 \in \mathcal{H} \}$ to capture cross-domain information gains (e.g., CDR vs. target-only baselines). For each group $g \in \{0,1\}$, let $\{Z_i^{(g)}\}_{i=1}^{n_g}$ be i.i.d. samples from $\nu_t^g$. With probability at least $1-\delta$ (over the sampling),
\begin{align}
\Big| \E_{\nu_t^g}[h] - \frac{1}{n_g} \sum_{i=1}^{n_g} h\big(Z_i^{(g)}\big) \Big| \;\leq\; 2 \, \mathfrak{R}_{n_g}(\mathcal{H}) + B \sqrt{\frac{\log(2/\delta)}{2 n_g}}, \quad \forall h \in \mathcal{H},
\end{align}
where, for $\mathcal{H}_{\mathrm{gain}}$, one may use the complexity bound $\mathfrak{R}_{n_g}(\mathcal{H}_{\mathrm{gain}}) \leq \mathfrak{R}_{n_g}(\mathcal{H}) + \mathfrak{R}_{n_g}(\mathcal{H})$ by closure under differences.
\end{theorem}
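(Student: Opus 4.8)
The plan is to run the classical Rademacher-complexity argument for two-sided uniform convergence: concentrate the worst-case empirical deviation around its mean via the bounded-differences (McDiarmid) inequality, replace that mean by the Rademacher complexity through a symmetrization step, and then obtain the $\mathcal{H}_{\mathrm{gain}}$ statement from subadditivity of Rademacher complexity. Fix the group $g$ and abbreviate $n := n_g$, $\nu := \nu_t^g$, and $S := (Z_1^{(g)}, \dots, Z_n^{(g)})$ with the $Z_i^{(g)}$ i.i.d.\ from $\nu$. Define the one-sided worst-case deviations
\[
\Phi^{\pm}(S) \;:=\; \sup_{h \in \mathcal{H}} \pm\Big( \E_{\nu}[h] - \tfrac{1}{n}\sum_{i=1}^{n} h\big(Z_i^{(g)}\big) \Big),
\]
and note that the quantity to be controlled, $\sup_{h}\big|\E_{\nu}[h] - \tfrac1n\sum_i h(Z_i^{(g)})\big|$, equals $\max\{\Phi^{+}(S),\Phi^{-}(S)\}$.

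First I would do the concentration step. Because every $h \in \mathcal{H}$ takes values in an interval of length at most $B$ (as for $[0,1]$-valued ranking metrics such as NDCG or Hit Rate under the normalization used here; if one only invokes $\|h\|_\infty \le B$ verbatim the constant merely doubles), replacing a single coordinate $Z_i^{(g)}$ of $S$ changes $\tfrac1n\sum_i h(Z_i^{(g)})$ by at most $B/n$ uniformly in $h$, and hence changes $\Phi^{\pm}$ by at most $B/n$. McDiarmid's inequality then gives, for each sign separately, with probability at least $1-\delta/2$,
\[
\Phi^{\pm}(S) \;\le\; \E_S\big[\Phi^{\pm}(S)\big] + B\sqrt{\frac{\log(2/\delta)}{2n}},
\]
and a union bound over the two events yields the stated bound on $\max\{\Phi^{+},\Phi^{-}\}$ — the factor $2$ inside $\log(2/\delta)$ is exactly this union bound. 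Second, I would bound $\E_S[\Phi^{\pm}(S)]$ by the usual symmetrization: introduce an independent ghost sample $S' = (Z_1',\dots,Z_n')$, write $\E_{\nu}[h] = \E_{S'}\big[\tfrac1n\sum_i h(Z_i')\big]$, move the $S'$-expectation outside the supremum by Jensen, and insert i.i.d.\ Rademacher signs $\sigma_i$ (legitimate since $h(Z_i') - h(Z_i)$ is symmetric), obtaining $\E_S[\Phi^{\pm}(S)] \le 2\,\mathfrak{R}_{n}(\mathcal{H})$ with $\mathfrak{R}_{n}(\mathcal{H}) = \E\big[\sup_{h}\tfrac1n\sum_i \sigma_i h(Z_i^{(g)})\big]$. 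Combining the two steps proves the bound for $\mathcal{H}$.

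For $\mathcal{H}_{\mathrm{gain}} = \{h_1 - h_0 : h_0,h_1 \in \mathcal{H}\}$, I would rerun the same argument with $\mathcal{H}$ replaced by $\mathcal{H}_{\mathrm{gain}}$, and obtain the complexity bound from subadditivity of Rademacher complexity under Minkowski sums, $\mathfrak{R}_n(\mathcal{A}+\mathcal{B}) \le \mathfrak{R}_n(\mathcal{A}) + \mathfrak{R}_n(\mathcal{B})$, applied with $\mathcal{A} = \mathcal{H}$ and $\mathcal{B} = -\mathcal{H}$, together with $\mathfrak{R}_n(-\mathcal{H}) = \mathfrak{R}_n(\mathcal{H})$ (the Rademacher variables are sign-symmetric), giving $\mathfrak{R}_n(\mathcal{H}_{\mathrm{gain}}) \le \mathfrak{R}_n(\mathcal{H}) + \mathfrak{R}_n(\mathcal{H})$ as claimed; one only has to record that elements of $\mathcal{H}_{\mathrm{gain}}$ are bounded by $2B$, so the concentration term for that class carries the corresponding factor. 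I do not expect a genuine obstacle, as the result is classical; the points needing care are (i) the constant bookkeeping — the interval-length convention behind the bound $B$, which fixes the McDiarmid constant, and the union bound that produces $\log(2/\delta)$ — and (ii) measurability of $\Phi^{\pm}$, i.e.\ ensuring that the supremum over the uncountable Lipschitz class $\mathcal{O}$ is a bona fide random variable so that Jensen/Fubini in the symmetrization step are valid; this follows from separability of $(\Z,d_\Z)$ together with continuity of each $h = o\circ f_t\circ\phi$, which lets one restrict the supremum to a countable dense subclass.
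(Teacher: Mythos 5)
Your proposal is correct and follows essentially the same route as the paper's proof: symmetrization to get the $2\,\mathfrak{R}_{n_g}(\mathcal{H})$ term, concentration of the worst-case deviation to get the $B\sqrt{\log(2/\delta)/(2n_g)}$ term, and subadditivity for $\mathcal{H}_{\mathrm{gain}}$. You are in fact more careful than the paper on two points worth keeping — the concentration step genuinely needs the bounded-differences (McDiarmid) inequality rather than plain Hoeffding as the paper loosely states, and your remarks on the union bound, the $2B$ range for $\mathcal{H}_{\mathrm{gain}}$, and measurability of the supremum fill in details the paper omits.
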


\begin{proof}[Proof sketch]
Apply classical symmetrization for bounded function classes to each group, yielding the empirical Rademacher complexity term; use Hoeffding-type concentration to bound deviations; combine the two groups via the triangle inequality to conclude. Full proof appears in Appendix~\ref{proof2}.
\end{proof}

\begin{remark}
This result connects directly to unfairness from cross-domain information gain.
and group-wise sample size and function class complexity determine the observed gain disparity.
\end{remark}

\subsection{Sufficient Condition for Fairness Preservation in CDR}

We provide a sufficient condition under which CDR does not worsen target-domain fairness relative to a baseline $\Gamma_{\mathrm{UGF}}^0$ (obtained, e.g., without cross-domain transfer).

\begin{theorem}[Fairness Preservation via Upper Bounds]
If
\begin{align}
L_o L_f \Big( W_1(\nu_s^0, \nu_s^1) + \delta_t^0 + \delta_t^1 + \delta_s^0 + \delta_s^1 + 2\Delta_{ts} \Big) \;\leq\; \Gamma_{\mathrm{UGF}}^0,
\label{eq:fairness_sufficient}
\end{align}
then $\Gamma_{\mathrm{UGF}}^t \leq \Gamma_{\mathrm{UGF}}^0$. Equivalently, controlling source-group disparity and cross-domain representation shifts (in Wasserstein-1) within the Lipschitz-scaled threshold \eqref{eq:fairness_sufficient} suffices to prevent fairness degradation.
\end{theorem}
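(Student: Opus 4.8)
The plan is to derive this statement as an immediate corollary of the preceding Theorem (Upper Bound under Domain Shift and Group Imbalance), using nothing more than transitivity of $\leq$. That theorem already establishes
\begin{align}
\Gamma_{\mathrm{UGF}}^t \;\leq\; L_o L_f \Big( W_1(\nu_s^0, \nu_s^1) + \delta_t^0 + \delta_t^1 + \delta_s^0 + \delta_s^1 + 2\Delta_{ts} \Big),
\end{align}
and the hypothesis \eqref{eq:fairness_sufficient} asserts precisely that this right-hand side is at most $\Gamma_{\mathrm{UGF}}^0$. Chaining the two inequalities yields $\Gamma_{\mathrm{UGF}}^t \leq \Gamma_{\mathrm{UGF}}^0$, which is exactly the conclusion. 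No additional estimates are required; all the analytic content (Kantorovich--Rubinstein duality, Lipschitz composition, the triangle-inequality decomposition into $\Delta_{ts}$, $\delta_t^g$, $\delta_s^g$) was already discharged in the earlier proof, so here we merely invoke its output.

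For the ``equivalently'' clause I would add a one-sentence observation rather than a separate argument: with $L_o L_f$ held fixed as a scaling factor, the left-hand side of \eqref{eq:fairness_sufficient} is monotone nondecreasing in each of $W_1(\nu_s^0,\nu_s^1)$, $\delta_t^0$, $\delta_t^1$, $\delta_s^0$, $\delta_s^1$, and $\Delta_{ts}$; hence ``keeping the source-group disparity and the cross-domain representation shifts small enough that their $L_o L_f$-weighted sum does not exceed $\Gamma_{\mathrm{UGF}}^0$'' is a verbatim restatement of the hypothesis, not a strengthening or weakening of it.

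The only point warranting care — and it is a matter of bookkeeping, not of technique — is the precise reading of the baseline $\Gamma_{\mathrm{UGF}}^0$: I would make explicit at the outset that it denotes the UGF measure instantiated with the \emph{same} evaluation functional class $\mathcal{O}$ (hence the same $L_o$ and $B$) but applied to the target-only predictor, so that $\Gamma_{\mathrm{UGF}}^t$ and $\Gamma_{\mathrm{UGF}}^0$ are genuinely comparable, and that the Lipschitz constant $L_f$ appearing in \eqref{eq:fairness_sufficient} is that of the cross-domain target predictor $f_t$ used to define $\Gamma_{\mathrm{UGF}}^t$. Under this convention there is no real obstacle; the ``hard part,'' insofar as there is one, is simply remembering that the bound being invoked is an upper bound on the CDR quantity and a comparison threshold on the baseline quantity, so the implication runs in the stated direction and not the reverse.
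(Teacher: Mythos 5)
Your proof is correct and matches the paper's own argument exactly: both simply chain the upper bound from the preceding theorem with the hypothesized threshold via transitivity of $\leq$. The additional bookkeeping remarks about the baseline $\Gamma_{\mathrm{UGF}}^0$ and the monotonicity reading of the ``equivalently'' clause are sensible clarifications but do not change the substance.
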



\begin{proof}[Proof sketch]
Immediate by combining the upper bound in Eq.~\eqref{eq:ugf_upper_bound} with the threshold in Eq.~\eqref{eq:fairness_sufficient}. Full proof is in Appendix~\ref{proof1}.
\end{proof}

Eq.~\eqref{eq:fairness_sufficient} specifies a Lipschitz-scaled control term. Achieving fairness preservation requires simultaneously small source-domain group disparity and controlled representation shifts—conditions. In many real-world scenarios, source-domain fairness is seldom guaranteed or monitored, and group-conditional transfer patterns in CDR models can be inherently imbalanced. As a result, fairness violations may occur.

\subsection{Empirical Evaluation}
\begin{figure}[htbp]
    \centering
    \includegraphics[width=1\linewidth]{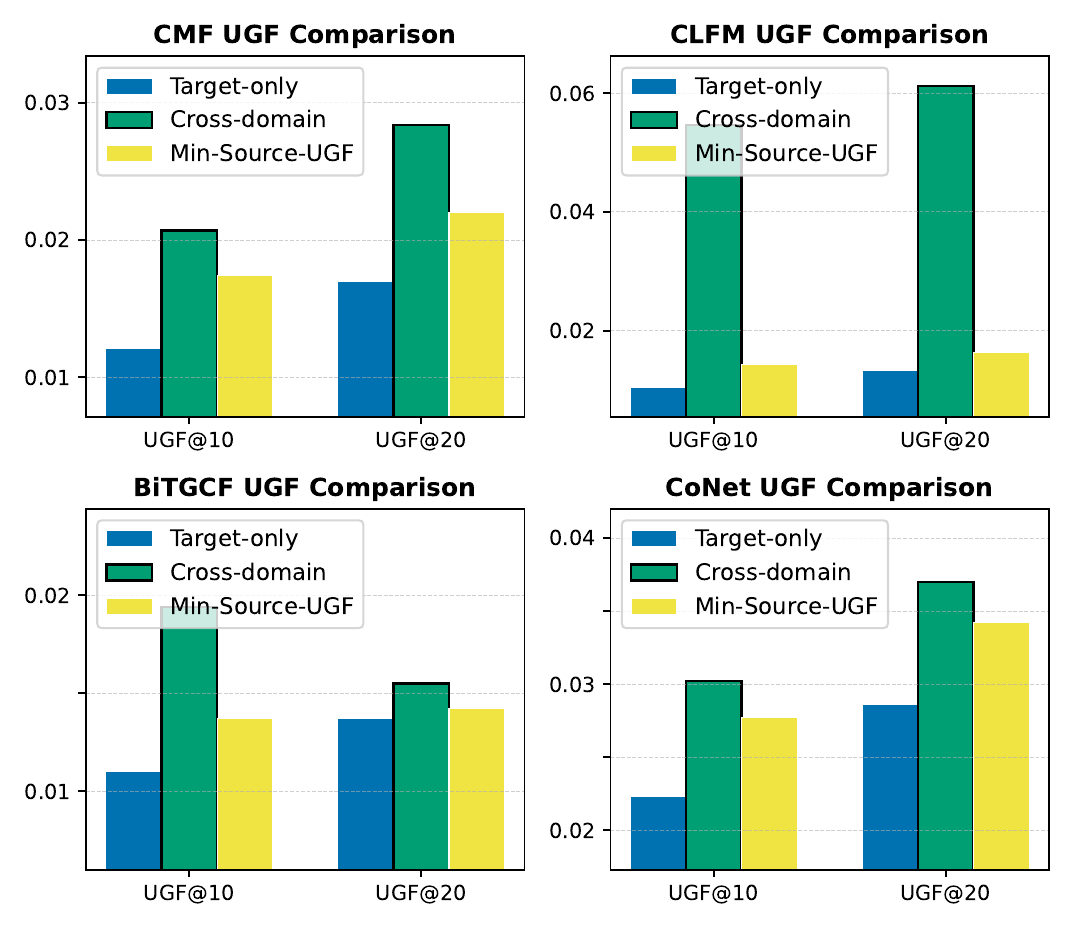}
    \caption{UGF comparison across settings on the Tenrec QB dataset, for the sensitive attribute \textit{gender}.}
    \label{fig:ugf_comparison}
    \vspace{-4mm}
\end{figure}

We empirically substantiate our theory on the Tenrec QB dataset, for which the details appear in Section~\ref{experiment}. Figure~\ref{fig:ugf_comparison} visualizes fairness performance under three settings: (i) target-only, (ii) CDR, and (iii) CDR with source-domain group disparity minimized. We observe:
\begin{itemize}
\item Comparing (i) and (ii), knowledge transfer in CDR is associated with a deterioration in target-domain fairness, suggesting that cross-domain representation shifts exacerbate fairness issues.
\item Comparing (ii) and (iii), fairness performance improves after addressing source-domain group disparity, confirming cross-domain disparity transfer. However, this alone does not fully resolve fairness challenges inherent to CDR.
\item Comparing (i) and (iii), even after mitigating cross-domain disparity transfer, the target domain still exhibits more severe fairness issues relative to (i). This indicates that new information generated during the CDR process also affects fairness, consistent with our theoretical insights.
\end{itemize}

\section{Methodology}
\begin{figure*}
    \centering
    \includegraphics[width=0.90\linewidth]{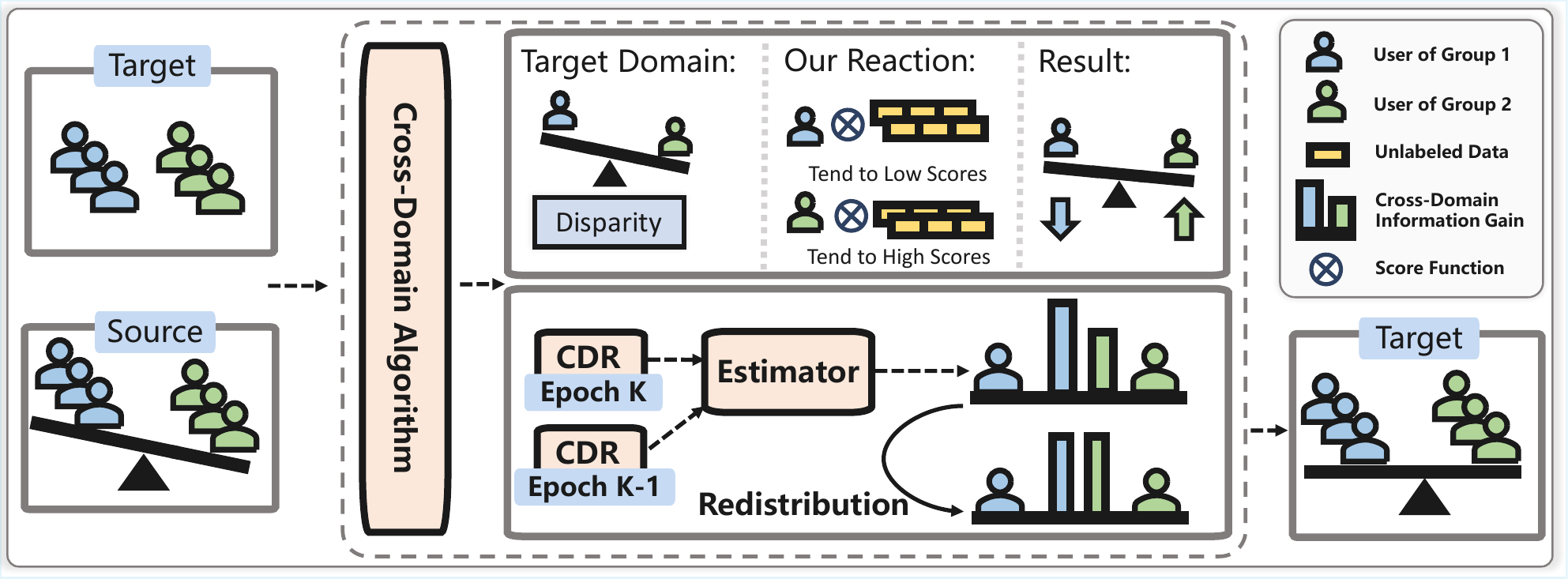}
    \vspace{-2mm}
    \caption{The illustration of our proposed Cross-Domain Fairness Augmentation (CDFA) framework.}
    \label{fig:model_illustration}
    \vspace{-3mm}
\end{figure*}
To address the aforementioned challenges, we introduce a novel \textbf{Cross-Domain Fairness Augmentation (CDFA)} framework. CDFA comprises two key modules. The first leverages unlabeled data to enhance the representation of disadvantaged groups, and the second estimates cross-domain information gain, adjusting each group to achieve parity in information acquisition. The overall workflow is illustrated in Figure~\ref{fig:model_illustration}. Notably, our approach is \textit{model-agnostic} and thus broadly applicable across diverse CDR algorithms.

\subsection{Cross-Domain Disparity Transfer}
The phenomenon of \textbf{cross-domain disparity transfer} presents a unique challenge. A straightforward, yet naïve, approach is to preprocess the source domain $\D_s$ by filtering out unfairness information~\cite{wang2022survey, chen2025causality, chen2026posttraining,  PCFR}, yielding a ``fairness-aware'' source domain:
\begin{align}
    \hat{\D}_s &= \text{filter}(\D_s).
\end{align}
Subsequently, $\hat{\D}_s$ would be used for cross-domain training. However, this approach presents two significant problems: (i) It may be suboptimal in CDR contexts, as knowledge transfer is often constrained by the limited availability of source domain information~\cite{ZLZY23,LiuZH022}. Further filtering of source information can exacerbate this limitation, potentially undermining the effectiveness of cross-domain recommendation. (ii) This approach only guarantees fairness within the source domain and neglects the impact of these changes on the target domain. To truly address fairness, we should focus on reducing disparity transfer effects within the target domain, rather than merely ensuring source domain fairness. Therefore, we refrain from adopting this approach and instead propose a novel fairness enhancement strategy that leverages unlabeled data (i.e., items with which users have not interacted). Rather than directly modifying the source domain distribution, we incorporate these unlabeled data into the CDR training process to equilibrate the informativeness of training signals across user groups.

In standard CDR training paradigms, negative sampling is employed to utilize unlabeled data: for each positive sample $i^{+}_t$, a set of unlabeled data is randomly drawn as negative samples $i^{-}_t$, and the model is trained to distinguish between them:
\begin{align}
    \min_{\Theta} \mathcal{L}(u_t, i^{+}_t, i^{-}_t).
\end{align}
$\mathcal{L}$ may be instantiated as BPR~\cite{RFG12}, cross-entropy~\cite{an2024ddcdr}, or other loss functions defined by the CDR backbone, and $\Theta$ denotes model parameters. Recent works~\cite{ZCW13,HDD21,SCF23,DQY20,ZCH24} have demonstrated that ``hard'' negative samples---i.e., items more similar to user preferences---can yield better decision boundaries. Motivated by this, we propose group-aware negative sampling: if a substantial performance gap exists between groups, we sample harder negatives for the disadvantaged group and easier negatives for the advantaged group. When group disparities are minimal, uniform sampling suffices. This mechanism mitigates performance imbalances induced by the source domain in the knowledge transfer process.

To operationalize this, we first estimate the performance gap between advantaged and disadvantaged groups. While metrics such as NDCG offer direct measurement, their non-differentiability precludes their use in end-to-end training~\cite{ZCC24,bruch2019revisiting}. Instead, inspired by~\cite{beutel2019fairness, yoo2024ensuring}, we employ the recommendation loss as a differentiable proxy for group performance:
\begin{equation}
\bar{\mathcal{L}}_g^{(K)} = 
\begin{cases}
{\mathcal{L}}_g^{(K)} & \text{if } K = 0 \\
\beta \cdot \bar{\mathcal{L}}_g^{(K-1)} + (1-\beta) \cdot {\mathcal{L}}_g^{(K)} & \text{if } K > 0
\end{cases}
\end{equation}
where $\mathcal{L}_g^{(K)}$ is the recommendation loss for group $g$ at epoch $K$. Here, we adopt a momentum-based approach rather than relying solely on the loss from the current epoch. This choice mitigates the impact of stochastic fluctuations and noise that may arise in individual epochs~\cite{ZZH22}. By leveraging momentum, we smooth the loss trajectory over recent epochs, facilitating a more reliable measure of group performance. The parameter $\beta$ governs the degree of smoothing. 
Next, for a given user $u_t$ (omitting the superscript $t$ for brevity), we compute the relative performance gap as follows:
\begin{equation}
\alpha_u
= \frac{\bar{\mathcal{L}}_{g_u}^{(K)} - \bar{\mathcal{L}}_{\text{avg}}^{(K)}}
       {\bar{\mathcal{L}}_{\text{avg}}^{(K)}},
\label{eq:alpha_def}
\end{equation}
where $\bar{\mathcal{L}}_{\text{avg}^{(K)}}$ is the average of $\bar{\mathcal{L}}_{g_1}^{(K)}$ and $\bar{\mathcal{L}}_{g_2}^{(K)}$. A larger $\alpha_u$ indicates greater group disadvantage, warranting harder negative samples. Accordingly, we modulate the negative sampling probability for user $u$ over a candidate set $M$ of unlabeled data:
\begin{align}
\label{eqn:sampling_prob}
    p(i \mid u)^{(K)} &= \frac{\exp \left(score(\mathbf{u}_t, \mathbf{i}_t) / \tau_u\right)}{\sum_{j \in M} \exp \left(score(\mathbf{u}_t, \mathbf{j}_t) / \tau_u\right)}, \quad i \in M \\
    \tau_u &= \exp(-\epsilon \alpha_u).
\end{align}

Here, $score(\cdot,\cdot)$ denotes the model's scoring function (e.g., inner product). $\mathbf{u}_*$, $\mathbf{i}_*$, and $\mathbf{j}_*$ represent the embeddings of the user, positive items, and negative items. $\epsilon$ is a hyperparameter that controls the sharpness of the distribution. As $\epsilon$ increases, the probability of sampling hard negatives correspondingly becomes higher. The exponential function $\exp(\cdot)$ ensures that the sampling probabilities remain positive. The candidate set $M$ is constructed by randomly sampling from the entire pool of available unlabeled data, with the size of $M$ specified as a hyperparameter. This strategy of constructing $M$ offers two key advantages. Firstly, directly computing negative sampling probabilities over the complete set of unlabeled data would be computationally prohibitive. Secondly, the unlabeled pool may contain items genuinely preferred by users, resulting in potential false negatives. By randomly sampling a smaller subset, we reduce the likelihood of inadvertently choosing such user-favored items as negatives, since their overall proportion within the pool remains low~\cite{ZCL23}.


\subsection{Cross-Domain Gain Redistribution}
This method addresses the varying degrees of cross-domain gains across different groups. Since groups may benefit unequally from cross-domain transfer, we seek to redistribute these gains to achieve a more balanced improvement.

To implement this, we first estimate the gains arising from cross-domain learning. From an information-theoretic perspective, the cross-domain information gain is defined as follows:
\begin{definition}[Cross-Domain Gain]
The cross-domain gain between a source and target domain is defined as follows:
\begin{align}
    \Delta I_{st} = I(Y; \X_s, \X_t) - [I(Y;\X_s) + I(Y;\X_t)],
\end{align}
where $Y$ denotes the final prediction, $I(Y; \X_s, \X_t)$ is the mutual information between $Y$ and the interactions from both domains, and $I(Y;\X_s)$ and $I(Y;\X_t)$ are the mutual information between $Y$ and each domain interaction individually. 
\end{definition}
Clearly, the difference above quantifies the additional information obtained through cross-domain learning. However, in practical scenarios, we cannot access every user's information from each domain, as not all users have behavioral records in both domains. Therefore, we estimate this value using only the overlapping users. Then we reformulate it into a form amenable to neural network estimation, as follows:
\begin{align}
\Delta I_{st} &= I(Y; \X_s,\X_t) - I(Y;\X_s) - I(Y;\X_t)  \notag \\
&= \mathbb{E}\left[\log\frac{p(y|\mathbf{x}_s,\mathbf{x}_t)}{p(y)}\right] - \mathbb{E}\left[\log\frac{p(y|\mathbf{x}_s)}{p(y)}\right] - \mathbb{E}\left[\log\frac{p(y|\mathbf{x}_t)}{p(y)}\right]  \notag \\
&= \mathbb{E}\left[ \log \frac{p(y|\mathbf{x}_s,\mathbf{x}_t)}{p(y|\mathbf{x}_s)p(y|\mathbf{x}_t)} \right] + c,
\end{align}
where $c$ is a constant that can be ignored during optimization, and $\mathbf{x}_*$ is a specific interaction in the different domain. 
It is important to note that since recommendation models are typically updated epoch by epoch, cross-domain gains may fluctuate at each epoch. We thus refine our formulation for the $K$-th epoch:
\begin{align}
    \hat{\Delta I}_{st}^{K} = \mathbb{E} \left[ \log \frac{p(y|\mathbf{x}_s^{K}, \mathbf{x}_t^{K})}{p(y|\mathbf{x}_s^{K}) p(y|\mathbf{x}_t^{K})} \right],
\end{align}
Among these terms, $p(y|\mathbf{x}_s^{K})$ and $p(y|\mathbf{x}_t^{K})$ can be readily estimated using standard approaches. For example, classic methods such as BPR~\cite{RFG12} are based on this principle. The operations are as follows:
\begin{align}
P(y|\mathbf{x}_s^{K}) &= \sigma (score(\mathbf{u}_s^K, \mathbf{i}_t^K)), \\
P(y|\mathbf{x}_t^{K}) &= \sigma (score(\mathbf{u}_t^K \cdot \mathbf{i}_t^K)),
\end{align}
where $\sigma(\cdot)$ denotes the sigmoid function,  $\mathbf{u}_*$ and $\mathbf{i}_*$ represent the embedding of user and item. To estimate $p(y|\mathbf{x}_s, \mathbf{x}_t)$, we employ a neural network to model the joint interaction:
\begin{align}
P(y|\mathbf{x}_s, \mathbf{x}_t) = \sigma (\mathrm{Estimator}(\mathbf{u}_t, \mathbf{u}_s) \cdot \mathbf{i}_t),
\end{align}
where $\mathrm{Estimator}(\cdot,\cdot)$ is a network (e.g., a multi-layer perceptron) trained to capture the new information generated by the integration of source and target representations.

A critical challenge is the lack of direct supervision for training $\mathrm{Estimator}(\cdot,\cdot)$. However, we observe that after training at epoch $K-1$, the updated target user representation $\mathbf{u}_t^K$ inherently incorporates the cross-domain gain. We thus use the previous epoch’s embeddings as input and minimize their discrepancy with the updated target embedding:
\begin{align}
\min_{\Theta_{\mathrm{Estimator}}} \| \mathrm{Estimator}(\mathbf{u}_t^{K-1}, \mathbf{u}_s^{K-1}) - \mathbf{u}_t^{K} \|^2,
\end{align}
where only the parameters of $\mathrm{Estimator}(\cdot,\cdot)$ are updated. This loss encourages the network to capture the transformation induced by cross-domain learning over one epoch.

Having estimated the cross-domain gain, we seek to balance the gains across groups. Formally, we minimize the variance in estimated gains between different groups:
\begin{align}
    \min {\|\Delta \tilde{I}_{st}^{g_1} - \Delta \tilde{I}_{st}^{g_2}\|^2},
\end{align}
where $\Delta \tilde{I}_{st}^{g_*}$ denotes the cross-domain gain of different groups. We introduce a balancing hyperparameter $\gamma$ to control the trade-off between the redistribution loss and the original loss.

\subsection{Training Procedure}
The overall training process integrates the above methods with the base cross-domain recommendation model, resulting in the following loss functions:
\begin{align}
&\min_{\Theta \setminus \Theta_{\mathrm{Estimator}}} \mathcal{L} + \gamma \|\Delta \tilde{I}_{st}^{g_1} - \Delta \tilde{I}_{st}^{g_2}\|^2, \\
&\min_{\Theta_{\mathrm{Estimator}}} \| \mathrm{Estimator}(\mathbf{u}_t^{K-1}, \mathbf{u}_s^{K-1}) - \mathbf{u}_t^{K} \|^2,
\end{align}
where $\mathcal{L}$ is the standard recommendation loss. Notably, our approach does not modify the underlying architecture of the CDR model, thus enabling broad applicability across various cross-domain recommendation frameworks.

{\textbf{Time Complexity.} Although the two main modules in our framework introduce extra computations, the overall time complexity remains on the same order as typical cross-domain recommendation frameworks. We provide a more detailed time complexity analysis in Appendix~\ref{app:timecomplexity}.}

\section{EXPERIMENTS}
\label{experiment}
In this section, we conduct comprehensive experiments to address the following research questions: \textbf{RQ1:} How does integrating CDFA into various CDR models help alleviate the unfairness? \textbf{RQ2:} How does the performance of CDFA compare with that of the state-of-the-art (SOTA) fairness-aware CDR methods? \textbf{RQ3:} How do the different components of CDFA affect the performance? \textbf{RQ4:} How do different parameter settings influence the performance of CDFA?

\subsection{Experimental Setup}

\textbf{Datasets} We conduct experiments on two datasets: (i) \textbf{QB}~\cite{yuan2022tenrec}: Collected from the QQ Browser platform, this dataset includes user interactions from both the Video and Article domains. (ii) \textbf{QK}~\cite{yuan2022tenrec}: Sourced from the QQ Kan platform, this dataset also covers both Video and Article recommendation scenarios. 


\textbf{Baselines}. 
As our approach is model-agnostic, we validate its effectiveness by integrating CDFA into the following representative CDR methods: \textbf{CMF}~\cite{SG08}, \textbf{CLFM}~\cite{GLC13}, \textbf{BiTGCF}~\cite{LLL20}, \textbf{CoNet}~\cite{HZY18}, and \textbf{DTCDR}~\cite{zhu2019dtcdr}. Furthermore, we conduct comparisons between CDFA and several representative fairness-aware CDR methods: \textbf{FairCDR}~\cite{tang2024fairness} and \textbf{VUG}~\cite{chen2025leave}.

To ensure reproducibility, our experiments are conducted based on the RecBole CDR framework~\cite{ZMH21}. Due to space constraints, detailed experimental setups—including dataset statistics, implementation, and descriptions of baselines—are provided in Appendix~\ref{app:exp}.

\begin{table*}[t]
\small
\caption{Experimental results of different CDR models with (w/) or without (w/o) our CDFA. The best results are highlighted in bold, and all improvements over the best-performing baseline(s) are statistically significant ($p < 0.05$).}
\centering
\begin{tabular}{@{}clllllllllll@{}}
\toprule
\multirow{2}{*}{\textbf{Datasets}} & \multirow{2}{*}{\textbf{Metric}} & \multicolumn{2}{c}{\textbf{CMF}} & \multicolumn{2}{c}{\textbf{CLFM}} & \multicolumn{2}{c}{\textbf{BiTGCF}} & \multicolumn{2}{c}{\textbf{CoNet}} & \multicolumn{2}{c}{\textbf{DTCDR}} \\
 \cmidrule(l){3-12} 
  &  & w/o & w/ & w/o & w/ & w/o & w/ & w/o & w/ & w/o & w/ \\ \midrule
\multirow{10}{*}{QB - gender}
 & Recall@10 & 0.2461 & \textbf{0.2715} & 0.2049 & \textbf{0.2825} & 0.2677 & \textbf{0.2754} & 0.2266 & \textbf{0.2381} & 0.2138 & \textbf{0.2324} \\
 & Recall@20 & 0.3767 & \textbf{0.4139} & 0.3070 & \textbf{0.4223} & 0.4062 & \textbf{0.4138} & 0.3596 & \textbf{0.3801} & 0.3174 & \textbf{0.3636} \\
 & NDCG@10 & 0.1241 & \textbf{0.1411} & 0.1023 & \textbf{0.1464} & 0.1398 & \textbf{0.1430} & 0.1133 & \textbf{0.1195} & 0.1084 & \textbf{0.1167} \\
 & NDCG@20 & 0.1571 & \textbf{0.1771} & 0.1281 & \textbf{0.1817} & 0.1748 & \textbf{0.1779} & 0.1468 & \textbf{0.1551} & 0.1347 & \textbf{0.1498} \\
\cmidrule(l){2-12}
 & Acc. Impr. &  & \textbf{+11.66\%} &  & \textbf{+40.09\%} &  & \textbf{+2.20\%} &  & \textbf{+5.48\%} &  & \textbf{+10.53\%}\\ \cmidrule(l){2-12}
 & UGF (Recall@10) & 0.0381 & \textbf{0.0008} & 0.1111 & \textbf{0.0015} & 0.0364 & \textbf{0.0133} & 0.0513 & \textbf{0.0005} & 0.0946 & \textbf{0.0388} \\
 & UGF (Recall@20) & 0.0689 & \textbf{0.0006} & 0.1365 & \textbf{0.0131} & 0.0219 & \textbf{0.0133} & 0.0773 & \textbf{0.0013} & 0.1312 & \textbf{0.0440} \\
 & UGF (NDCG@10) & 0.0207 & \textbf{0.0008} & 0.0546 & \textbf{0.0016} & 0.0194 & \textbf{0.0025} & 0.0302 & \textbf{0.0002} & 0.0541 & \textbf{0.0220} \\
 & UGF (NDCG@20) & 0.0284 & \textbf{0.0006} & 0.0612 & \textbf{0.0020} & 0.0155 & \textbf{0.0023} & 0.0370 & \textbf{0.0003} & 0.0636 & \textbf{0.0234} \\
\cmidrule(l){2-12}
 & Fair. Impr. &  & \textbf{+97.76\%} &  & \textbf{+95.71\%} &  & \textbf{+68.75\%} &  & \textbf{+98.97\%} &  & \textbf{+62.00\%} \\ \midrule
\multirow{10}{*}{QB - age}
 & Recall@10 & 0.2461 & \textbf{0.2511} & 0.2049 & \textbf{0.2514} & 0.2677 & \textbf{0.2736} & 0.2266 & \textbf{0.2276} & \textbf{0.2138} & 0.2112 \\
 & Recall@20 & 0.3767 & \textbf{0.3842} & 0.3070 & \textbf{0.3877} & 0.4062 & \textbf{0.4100} & 0.3596 & \textbf{0.3628} & 0.3174 & \textbf{0.3213} \\
 & NDCG@10 & 0.1241 & \textbf{0.1301} & 0.1023 & \textbf{0.1284} & 0.1398 & \textbf{0.1422} & 0.1133 & \textbf{0.1138} & \textbf{0.1084} & 0.1083 \\
 & NDCG@20 & 0.1571 & \textbf{0.1636} & 0.1281 & \textbf{0.1627} & 0.1748 & \textbf{0.1766} & 0.1468 & \textbf{0.1479} & 0.1347 & \textbf{0.1361} \\
\cmidrule(l){2-12}
 & Acc. Impr. &  & \textbf{+3.25\%} &  & \textbf{+25.38\%} &  & \textbf{+1.47\%} &  & \textbf{+0.63\%} &  & \textbf{+0.24\%}\\ \cmidrule(l){2-12}
 & UGF (Recall@10) & 0.0196 & \textbf{0.0166} & 0.0195 & \textbf{0.0061} & 0.0184 & \textbf{0.0057} & 0.0168 & \textbf{0.0007} & 0.0129 & \textbf{0.0089} \\
 & UGF (Recall@20) & 0.0245 & \textbf{0.0160} & 0.0357 & \textbf{0.0126} & 0.0315 & \textbf{0.0145} & 0.0205 & \textbf{0.0122} & 0.0255 & \textbf{0.0112} \\
 & UGF (NDCG@10) & 0.0084 & \textbf{0.0061} & 0.0078 & \textbf{0.0018} & 0.0129 & \textbf{0.0020} & 0.0078 & \textbf{0.0029} & 0.0109 & \textbf{0.0081} \\
 & UGF (NDCG@20) & 0.0096 & \textbf{0.0062} & 0.0120 & \textbf{0.0036} & 0.0164 & \textbf{0.0041} & 0.0086 & \textbf{0.0059} & 0.0140 & \textbf{0.0093} \\
\cmidrule(l){2-12}
 & Fair. Impr. &  & \textbf{+28.20\%} &  & \textbf{+70.09\%} &  & \textbf{+70.62\%} &  & \textbf{+57.63\%} &  & \textbf{+36.59\%} \\ \midrule
\multirow{10}{*}{QK - gender}
 & Recall@10 & 0.0210 & \textbf{0.0260} & 0.0132 & \textbf{0.0277} & \textbf{0.0264} & 0.0260 & \textbf{0.0129} & 0.0128 & \textbf{0.0148} & 0.0145 \\
 & Recall@20 & 0.0384 & \textbf{0.0430} & 0.0231 & \textbf{0.0430} & \textbf{0.0418} & 0.0405 & \textbf{0.0237} & 0.0224 & 0.0259 & \textbf{0.0262} \\
 & NDCG@10 & 0.0106 & \textbf{0.0136} & 0.0067 & \textbf{0.0151} & \textbf{0.0139} & 0.0138 & \textbf{0.0064} & 0.0063 & 0.0071 & \textbf{0.0073} \\
 & NDCG@20 & 0.0151 & \textbf{0.0180} & 0.0093 & \textbf{0.0191} & \textbf{0.0180} & 0.0176 & \textbf{0.0092} & 0.0088 & 0.0100 & \textbf{0.0104} \\
\cmidrule(l){2-12}
 & Acc. Impr. &  & \textbf{+20.82\%} &  & \textbf{+106.69\%} &  & \textbf{-1.89\%} &  & \textbf{-3.04\%} &  & \textbf{+1.49\%}\\ \cmidrule(l){2-12}
 & UGF (Recall@10) & 0.0014 & \textbf{0.0006} & 0.0019 & \textbf{0.0002} & 0.0016 & \textbf{0.0015} & \textbf{0.0013} & 0.0025 & 0.0048 & \textbf{0.0007} \\
 & UGF (Recall@20) & 0.0034 & \textbf{0.0016} & 0.0017 & \textbf{0.0005} & \textbf{0.0023} & 0.0037 & 0.0033 & \textbf{0.0028} & 0.0058 & \textbf{0.0003} \\
 & UGF (NDCG@10) & 0.0009 & \textbf{0.0000} & 0.0004 & \textbf{0.0004} & 0.0015 & \textbf{0.0007} & 0.0010 & \textbf{0.0005} & 0.0023 & \textbf{0.0007} \\
 & UGF (NDCG@20) & 0.0013 & \textbf{0.0002} & 0.0003 & \textbf{0.0003} & 0.0017 & \textbf{0.0001} & 0.0015 & \textbf{0.0006} & 0.0026 & \textbf{0.0005} \\
\cmidrule(l){2-12}
 & Fair. Impr. &  & \textbf{+73.67\%} &  & \textbf{+40.02\%} &  & \textbf{+23.21\%} &  & \textbf{+8.21\%} &  & \textbf{+82.64\%} \\ \midrule
\multirow{10}{*}{QK - age}
 & Recall@10 & 0.0210 & \textbf{0.0282} & 0.0132 & \textbf{0.0212} & 0.0264 & \textbf{0.0265} & 0.0129 & \textbf{0.0131} & \textbf{0.0148} & 0.0146 \\
 & Recall@20 & 0.0384 & \textbf{0.0455} & 0.0231 & \textbf{0.0372} & \textbf{0.0418} & 0.0397 & 0.0237 & \textbf{0.0238} & 0.0259 & \textbf{0.0260} \\
 & NDCG@10 & 0.0106 & \textbf{0.0147} & 0.0067 & \textbf{0.0110} & 0.0139 & \textbf{0.0146} & 0.0064 & \textbf{0.0065} & 0.0071 & \textbf{0.0071} \\
 & NDCG@20 & 0.0151 & \textbf{0.0192} & 0.0093 & \textbf{0.0152} & 0.0180 & \textbf{0.0181} & 0.0092 & \textbf{0.0092} & 0.0100 & \textbf{0.0101} \\
\cmidrule(l){2-12}
 & Acc. Impr. &  & \textbf{+29.65\%} &  & \textbf{+62.32\%} &  & \textbf{+0.24\%} &  & \textbf{+0.88\%} &  & \textbf{+0.01\%}\\ \cmidrule(l){2-12}
 & UGF (Recall@10) & 0.0018 & \textbf{0.0007} & 0.0015 & \textbf{0.0001} & 0.0017 & \textbf{0.0004} & 0.0025 & \textbf{0.0018} & 0.0030 & \textbf{0.0010} \\
 & UGF (Recall@20) & 0.0003 & \textbf{0.0003} & 0.0059 & \textbf{0.0001} & \textbf{0.0023} & 0.0024 & 0.0023 & \textbf{0.0021} & 0.0045 & \textbf{0.0013} \\
 & UGF (NDCG@10) & 0.0014 & \textbf{0.0005} & 0.0010 & \textbf{0.0001} & 0.0014 & \textbf{0.0006} & 0.0008 & \textbf{0.0006} & 0.0014 & \textbf{0.0001} \\
 & UGF (NDCG@20) & 0.0010 & \textbf{0.0003} & 0.0022 & \textbf{0.0000} & 0.0016 & \textbf{0.0001} & 0.0008 & \textbf{0.0007} & 0.0018 & \textbf{0.0002} \\
\cmidrule(l){2-12}
 & Fair. Impr. &  & \textbf{+48.85\%} &  & \textbf{+95.41\%} &  & \textbf{+55.75\%} &  & \textbf{+18.55\%} &  & \textbf{+79.88\%} \\ \bottomrule
\end{tabular}
\label{tab:main_results}
\end{table*}

\subsection{Overall Performance Comparison}
We report the overall performance results in Table~\ref{tab:main_results}. All reported improvements represent average gains across all evaluation metrics, and all improvements are statistically significant ($p < 0.05$) compared to the best-performing baseline(s), as assessed by a two-sided t-test. Key observations include:
\begin{itemize}
    \item \textbf{Significant Improvement in Fairness:} With the integration of CDFA, all base models exhibit notable improvements in UGF-related metrics. For example, CMF achieves a performance increase of up to 98.97\% on QB. These results demonstrate CDFA’s effectiveness in mitigating group unfairness by addressing the two key challenges we identified.

    \item \textbf{Preserved or Improved Accuracy:} Notably, while CDFA substantially enhances fairness, it also improves accuracy in most scenarios. In rare cases where a decrease occurs, the maximum decline is just 3.04\%, which is a negligible compromise. We attribute this to our method’s two-part design: the augmentation of information via unlabeled data and the redistribution of information gains. This increases the total information utilized without diminishing any CDR information, thus achieving a balance between fairness and accuracy. Such a property makes CDFA suitable and attractive for real-world industrial applications.

    \item \textbf{SOTA Methods and Fairness Trade-off:} Interestingly, even state-of-the-art methods underperform on fairness-related metrics, sometimes even lagging behind traditional approaches such as BiTGCF and CLFM. This finding underscores the risk of disregarding fairness during model design: improvements in accuracy may come at the expense of fairness, thereby potentially harming user experience and satisfaction.
\end{itemize}

\subsection{Performance Comparison with Other Fairness-aware CDR Methods}  
As illustrated on Figure~\ref{fig:fairCDR}, CDFA consistently outperforms other fairness-aware CDR methods in both accuracy and fairness metrics. Importantly, CDFA does not incur any loss in accuracy. This advantage can be attributed to two main factors: (i) Unlike previous approaches that focus solely on mitigating unfairness, CDFA is designed with a thorough understanding of how fairness is generated and amplified within the CDR context. This principled approach enables CDFA to more effectively address fairness challenges inherent to CDR. (ii) CDFA has no lost information. The first module leverages CDR to mine informative signals from unlabeled data, thereby introducing additional information. This design facilitates a strong balance between accuracy and fairness.

\begin{figure*}[htbp]
    \centering
    \includegraphics[width=1\linewidth]{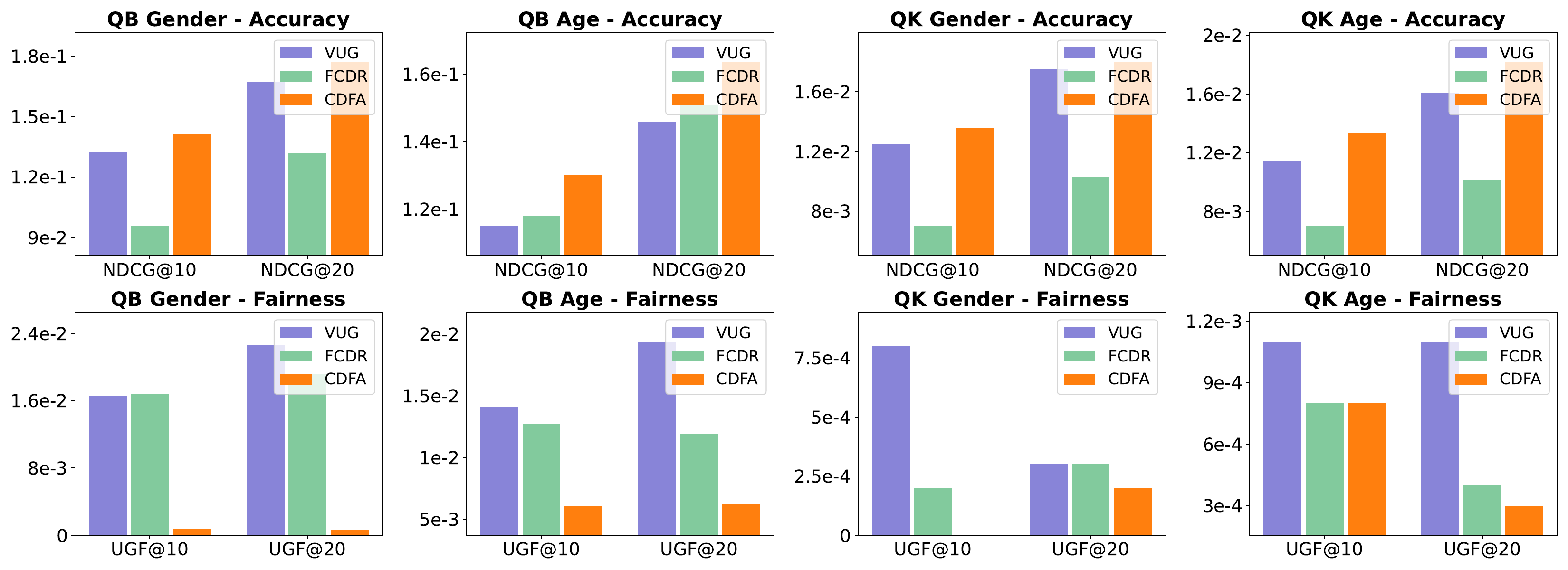}
    \vspace{-5mm}
    \caption{Comparison of fairness-aware cross-domain recommendation baselines on the Tenrec QB and QK datasets.}
    \label{fig:fairCDR}
    \vspace{-4mm}
\end{figure*}

\subsection{Ablation Study}
\label{sec:component_analysis}
We analyze the effectiveness of different components via the following variants: (i) CDFA without $\alpha_u$ (w/o $\alpha_u$), (ii) CDFA without sampling (w/o FS, using random sampling), (iii) CDFA without redistribution loss (w/o $\mathcal{L}_{redist}$), where $\mathcal{L}_{redist} = \|\Delta \tilde{I}_{st}^{G_1} - \Delta \tilde{I}_{st}^{G_2}\|^2$, and (iv) CDFA without estimator loss (w/o $\mathcal{L}_{est}$),  where $\mathcal{L}_{est} = \| \mathrm{Estimator}(\mathbf{u}_t^{K-1}, \mathbf{u}_s^{K-1}) - \mathbf{u}_t^{K} \|^2$. The results are presented in Table~\ref{tab:ablation_gender}. They indicate that all components make positive contributions to model fairness, which is expected given that each module in CDFA is explicitly designed to enhance fairness. In terms of accuracy, we observe some performance fluctuations when CDFA is applied without $\mathcal{L}{est}$. This can be explained by the fact that, during this phase, estimated cross-domain information gain must be redistributed. Since estimation may not be perfectly precise, the redistribution may be suboptimal, resulting in occasional deviations in accuracy. 

\begin{table}[t]
\small
\centering
\caption{Ablation study of CDFA, integrated with CMF on the Tenrec QB dataset for gender fairness.}
\vspace{-2mm}
\label{tab:ablation_gender}
\begin{tabular}{lcccc}
\toprule
\multirow{2}{*}{\textbf{Method}} & \multicolumn{4}{c}{\textbf{Accuracy (\textit{larger} is better)}}\\
\cmidrule(lr){2-5}
& Recall@10 & Recall@20 & NDCG@10 & NDCG@20 \\ 
\midrule
\emph{w/o $\alpha_u$} & 0.2727 & 0.4137 & 0.1414 & 0.1770 \\ 
\emph{w/o FS} & 0.2571 & 0.3950 & 0.1322 & 0.1670 \\ 
\emph{w/o $\mathcal{L}_{redist}$} & 0.2526 & 0.3905 & 0.1308 & 0.1656 \\ 
\emph{w/o $\mathcal{L}_{est}$} & \textbf{0.2729} & 0.4128 & \textbf{0.1414} & 0.1768 \\ 
\textbf{CDFA} & {0.2715} & \textbf{0.4139} & {0.1411} & \textbf{0.1771} \\ 
\midrule
\multirow{2}{*}{\textbf{Method}} & \multicolumn{4}{c}{\textbf{UGF (\textit{smaller} is better)}}\\
\cmidrule(lr){2-5}
& Recall@10 & Recall@20 & NDCG@10 & NDCG@20 \\
\midrule
\emph{w/o $\alpha_u$} & 0.0044 & 0.0020 & 0.0011 & 0.0007 \\
\emph{w/o FS} & 0.0393 & 0.0458 & 0.0237 & 0.0255 \\
\emph{w/o $\mathcal{L}_{redist}$} & 0.0471 & 0.0541 & 0.0295 &  0.0315 \\
\emph{w/o $\mathcal{L}_{est}$} & 0.0035 & 0.0024 & 0.0009 & 0.0007 \\ 
\textbf{CDFA} & \textbf{0.0008} & \textbf{0.0006} & \textbf{0.0008} & \textbf{0.0006} \\
\bottomrule
\end{tabular}
\vspace{-4mm}
\end{table}

\subsection{Hyperparameter Sensitivity Analysis}
\label{sec:parameter_sensitivity}
The three most critical hyperparameters in CDFA are: the size of $M$, which denotes the candidate set size during negative sampling (i.e., a larger $M$ means more unlabeled data considered); $\epsilon$, which regulates the strength of negative sampling, with higher values favoring harder negative instances; and $\gamma$, which determines the extent of gain redistribution, with a larger value yielding a more equitable distribution of gains across groups. The experimental results are shown in Figure~\ref{fig:param_sensitivity}. Three key observations emerge: (i) Within reasonable ranges, all hyperparameters allow CDFA to maintain robust accuracy and fairness, indicating relative stability. (ii) With $M$, fairness performance remains consistent, while accuracy exhibits noticeable fluctuations. This is understandable, as increasing $M$ risks introducing more false negatives into the candidate set, potentially distorting user preference learning. Although fairness is largely preserved, the accuracy may suffer due to incorrect user modeling. (iii) The influence of $\gamma$ on performance is more pronounced, with accuracy and fairness displaying a degree of mutual exclusivity. This aligns with intuition, as $\gamma$ mediates the trade-off between fairness-related and accuracy-related loss components. 
When $\gamma$ is high, the model prioritizes fairness optimization and may neglect accuracy, while lower values may result in diminished fairness improvements.

\begin{figure}[htbp]
    \centering
    \includegraphics[width=1\linewidth]{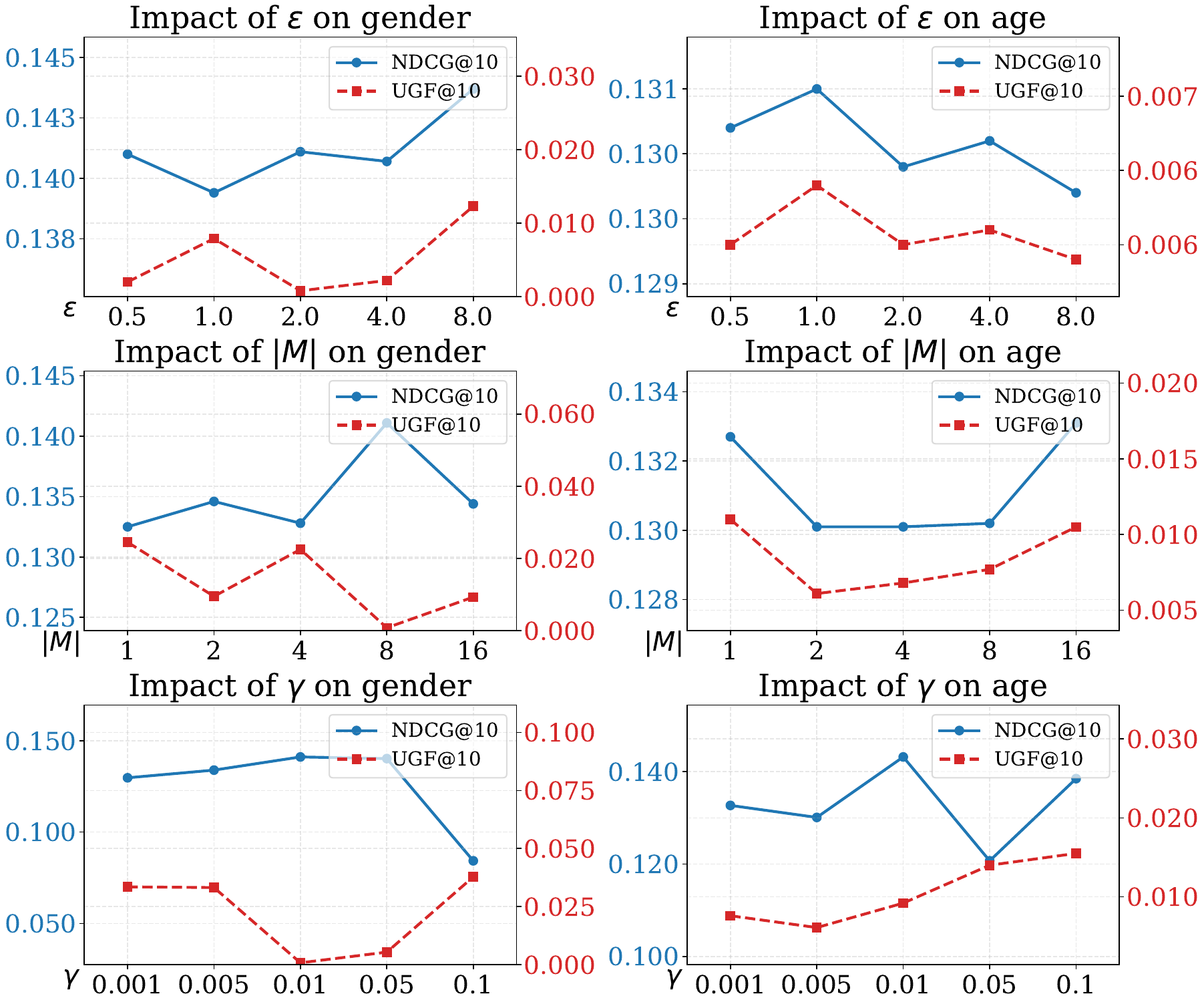}
    \vspace{-3mm}
    \caption{Impact of the hyperparameters on our CDFA on the Tenrec QB dataset.}
    \label{fig:param_sensitivity}
    \vspace{-4mm}
\end{figure}

\section{Related Work}

\subsection{Cross-Domain Recommendation}
Cross-domain recommendation (CDR) aims to leverage information from a source domain to improve the accuracy of recommendations in a target domain. For example, CMF~\cite{SG08} was among the first to jointly factorize multiple rating matrices by sharing latent parameters across domains.
PPCDR~\cite{tian2024privacy} adopts a federated graph learning approach to capture user preferences to ensure privacy. GDCCDR~\cite{liu2024graph} achieves personalized transfer through meta-networks and contrastive learning. PCDR~\cite{guo2024prompt} utilizes users' interactions on local clients, employs gradient encryption, and harnesses natural language to establish bridges between domains. MACDR~\cite{WYW24} explores unsupervised utilization of non-overlapping users. MFGSLAE~\cite{WYZ25} proposes a factor selection module with a bootstrapping mechanism to identify domain-specific preferences.

\subsection{Fairness in Cross-Domain Recommendation}
Although CDR has demonstrated substantial effectiveness, recent studies have brought to light notable fairness concerns—particularly the exacerbation of performance disparities among user groups defined by sensitive attributes. FairCDR~\cite{tang2024fairness} identifies unfair treatment of cold-start users in the target domain and proposes a novel algorithm that models the individual influence of each sample on both fairness and accuracy, thereby facilitating data reweighting. VUG~\cite{chen2025leave} reveals a critical bias in CDR: while overlapping users benefit substantially from improved recommendation quality, non-overlapping users gain minimally and may even experience performance degradation. To address this issue, they introduced a solution that generates virtual source-domain users for non-overlapping target-domain users to enhance the CDR process. FairnessCDR~\cite{wu2025faircdr} ensures fairness using adversarial learning and a mutual learning-based diffusion model.

\section{Conclusion}
In this work, through comprehensive theoretical analysis and empirical evaluation, we identify two fundamental mechanisms underlying fairness concerns in cross-domain recommendation (CDR): (i) cross-domain disparity transfer and (ii) unfairness arising from cross-domain information gain. To address these challenges, we propose the Cross-Domain Fairness Augmentation (CDFA) framework. CDFA first utilizes unlabeled data to balance the informativeness of training signals across user groups. It then adopts an information-theoretic approach to redistribute cross-domain information gains, ensuring equitable benefit across diverse groups. Extensive experiments on multiple datasets and strong baselines demonstrate that CDFA substantially reduces unfairness in CDR, while maintaining or even improving overall recommendation performance. Our approach marks an important step toward making CDR systems fairer and more trustworthy for users.
In the future, we also plan to investigate fairness issues in emerging agent-based recommendation paradigms~\cite{chen2026memrec, xu2025iagent}.

\begin{acks}
This work is supported by Hong Kong Baptist University Key Research Partnership Scheme (KRPS/23-24/02), NSFC/RGC Joint Research Scheme (N\_HKBU214/24), and National Natural Science Foundation of China (62461160311).
\end{acks}

\clearpage

\balance
\bibliographystyle{ACM-Reference-Format}
\bibliography{paper}

\clearpage

\appendix
\section{Full Proofs}
\addcontentsline{toc}{section}{Appendix A: Full Proofs}

\subsection{Upper Bound under Domain Shift and Group Imbalance}
\label{proof1}
\begin{proof}
Let $h = o \circ f_t$. By Lipschitz composition, $\mathrm{Lip}(h) \leq L_o L_f$. Consider probability measures $\mu, \nu$ on $(\Z, d_\Z)$. By the Kantorovich–Rubinstein duality for Wasserstein-1,
\[
|\E_{\mu}[h] - \E_{\nu}[h]| \leq \mathrm{Lip}(h) \, W_1(\mu, \nu).
\]
Choosing $\mu = \nu_t^0$ and $\nu = \nu_t^1$ yields
\[
\Gamma_{\mathrm{UGF}}^t = \sup_{o \in \mathcal{O}} \big| \E_{\nu_t^0}[h] - \E_{\nu_t^1}[h] \big| \leq L_o L_f \, W_1(\nu_t^0, \nu_t^1).
\]
Next, apply the triangle inequality for $W_1$:
\[
W_1(\nu_t^0, \nu_t^1) \leq W_1(\nu_t^0, \nu_s^0) + W_1(\nu_s^0, \nu_s^1) + W_1(\nu_s^1, \nu_t^1).
\]
Introduce $\varepsilon := W_1(\nu_t, \nu_s)$, $\delta_t^g := W_1(\nu_t^g, \nu_t)$, $\delta_s^g := W_1(\nu_s^g, \nu_s)$. By the triangle inequality,
\[
W_1(\nu_t^g, \nu_s^g) \leq W_1(\nu_t^g, \nu_t) + W_1(\nu_t, \nu_s) + W_1(\nu_s, \nu_s^g) = \delta_t^g + \varepsilon + \delta_s^g.
\]
Therefore,
\[
\Gamma_{\mathrm{UGF}}^t \leq L_o L_f \Big( W_1(\nu_s^0, \nu_s^1) + \delta_t^0 + \delta_t^1 + \delta_s^0 + \delta_s^1 + 2\varepsilon \Big).
\]
\end{proof}

\subsection{Uniform Convergence for Group Gaps}
\label{proof2}
\begin{proof}
Define $\mathcal{H} := \{ h = o \circ f_t \circ \phi : o \in \mathcal{O} \}$ and assume $\|h\|_\infty \leq B$. For group $g$, let $\{Z_i^{(g)}\}_{i=1}^{n_g}$ be i.i.d. from $\nu_t^g$. By standard symmetrization,
\[
\E\Big[ \sup_{h \in \mathcal{H}} \big| \E[h(Z)] - \tfrac{1}{n_g} \sum_{i=1}^{n_g} h(Z_i^{(g)}) \big| \Big] \leq 2 \, \mathfrak{R}_{n_g}(\mathcal{H}).
\]
Apply Hoeffding’s inequality for bounded functions to obtain, with probability at least $1-\delta$,
\[
\big| \E_{\nu_t^g}[h] - \tfrac{1}{n_g} \sum_{i=1}^{n_g} h(Z_i^{(g)}) \big| \leq 2 \, \mathfrak{R}_{n_g}(\mathcal{H}) + B \sqrt{\tfrac{\log(2/\delta)}{2 n_g}}.
\]
Apply this bound to $g_0$ and $g_1$ separately and use the triangle inequality to bound the difference of group means.
\end{proof}

\subsection{Fairness Preservation via Upper Bounds}
\label{proof3}
\begin{proof}
From A.1, $\Gamma_{\mathrm{UGF}}^t \leq L_o L_f ( W_1(\nu_s^0, \nu_s^1) + \delta_t^0 + \delta_t^1 + \delta_s^0 + \delta_s^1 + 2\varepsilon )$. If
\[
L_o L_f ( W_1(\nu_s^0, \nu_s^1) + \delta_t^0 + \delta_t^1 + \delta_s^0 + \delta_s^1 + 2\varepsilon ) \leq \Gamma_{\mathrm{UGF}}^0,
\]
then immediately $\Gamma_{\mathrm{UGF}}^t \leq \Gamma_{\mathrm{UGF}}^0$.
\end{proof}

\section{Time Complexity Analysis}
\label{app:timecomplexity}

In this section, we provide a concise analysis of the time complexity of our proposed \textsc{CDFA} framework. We decompose the complexity into two parts including the \textbf{base cross-domain recommender} and the \textbf{fairness-specific modules}.

\paragraph{Base CDR Module.}
Let $d$ be the hidden dimension of network embeddings, and $|U|$, $|I_s|$, $|I_t|$ the numbers of users, source-domain items, and target-domain items, respectively. For simplicity, we denote $|I| \approx |I_s| + |I_t|$, and $|U|$ includes overlapping and non-overlapping users.

When training a common cross-domain neural recommender model (e.g., graph-based), each epoch typically requires processing $O(|U|\times|I|\times d)$ to $O(|U|\times |I|^2\times d + |U|\times |I|\times d^2)$ operations, depending on the architectural specifics (e.g., GNN layers). For models adopting negative sampling, it is standard to sample a small number $k$ of unlabeled items per positive instance, yielding an $O(k)$ factor per user--item pair. Consequently, the base CDR component often exhibits complexity on par with classical single-domain sequential recommenders, typically $O(\mathcal{B}\times L)$ per epoch, where $\mathcal{B}$ is the mini-batch size and $L$ is a model-dependent factor such as $|I|^2d + |I|d^2$ in Transformer-based encoders~\cite{xu2025heterogeneous}. 

\paragraph{Fairness Modules in \textsc{CDFA}}
Our \textsc{CDFA} introduces two additional modules:

\begin{itemize}[leftmargin=1em]
    \item \textbf{Group-Aware Negative Sampler.} 
    We adaptively select unlabeled items for each group, guided by a momentum-based disparity measure. Specifically, in each iteration, we build a small candidate set $\mathcal{M}$ of size $|{M}| \ll |I|$ and compute sampling probabilities via a softmax over (user, item) scores. The sampling overhead is thus $O(|{M}|\times d)$ per user, which is typically minimal (e.g., $|{M}| \in \{1, 2, 4, 8,16\}$).
    
    \item \textbf{Cross-Domain Gain Redistribution.} 
    We estimate the information gain $\Delta \hat{I}_{st}$ once per mini-batch using three forward passes: (i)~$p(y|x_s)$; (ii)~$p(y|x_t)$; (iii)~$p(y|x_s,x_t)$, plus a small multi-layer perceptron (MLP) for the Estimator network (Eq.\,(21) in the main paper). The MLP dimension is of order $d$, so the overhead remains $O(\mathcal{B} \times d^2)$ per iteration. Finally, redistributing the gain among user groups simply adds a variance penalty term in the loss, incurring negligible cost.
\end{itemize}

\paragraph{Overall Complexity.}
Hence, per epoch, \textsc{CDFA} adds only modest overhead on top of the base CDR. The group-aware sampling scales as $O(|{M}| \times d)$ per instance, while the gain redistribution requires at most $O(\mathcal{B} \times d^2)$. Both are small relative to typical cross-domain recommendation complexities (often $O(\mathcal{B}\times L)$ with $L$ involving $|I|^2d$ or $|I|d^2$~\cite{xu2025heterogeneous,lin2025towards}). Consequently, our \textsc{CDFA} maintains a time complexity comparable to standard cross-domain methods, while significantly improving fairness. 

The worst-case cost for our Group-Aware Negative Sampling is \(O(|M| \cdot d)\) per user, scaling linearly even if \( |M| \) is increased for higher precision. The Gain Redistribution module operates at \(O(B \cdot d^2)\) per batch, independent of the total number of users (\(|U|\)) or edges (\(|E|\)). In contrast to graph-based baselines (e.g., BiTGCF), where complexity scales with graph density (\(O(|E|)\)), our approach preserves the scalability profile of the backbone model.

In practice, we observe no noticeable slowdowns during experiments, demonstrating that our \textsc{CDFA} is computationally efficient and scalable to large real-world scenarios.

\section{Experimental Setup}
\label{app:exp}

\subsection{Datasets}
We conduct experiments on two datasets widely used in the literature for evaluating CDR performance:  
\begin{itemize}  
    \item \textbf{Tenrec QB}~\cite{yuan2022tenrec}: Collected from the QQ Browser (QB) platform, this dataset includes user interactions from both the Video and Article domains.  
    \item \textbf{Tenrec QK}~\cite{yuan2022tenrec}: Sourced from the QQ Kan (QK) platform, this dataset also covers both Video and Article recommendation scenarios, and provides detailed demographic attributes.
\end{itemize}  
It should be noted that fairness-aware CDR methods require datasets containing both cross-domain user behaviors and sensitive attributes (e.g., \textit{gender} and \textit{age}), which limits the number of suitable datasets available. Table~\ref{tab:statistics} summarizes the statistics of the datasets. Following~\cite{tang2024fairness}, for both datasets, we use the video domain as the source and the article domain as the target, since the video domain has more data.

\begin{table}
\centering
\caption{Statistics of the cross-domain recommendation datasets used in our experiments, with their available sensitive attributes. }
\label{tab:statistics}
\begin{tabular}{llrrrc}
\toprule
\textbf{Dataset} & \textbf{Domain} & \textbf{\#Users} & \textbf{\#Items} & \textbf{\#Inter.} & \textbf{Attr.} \\
\midrule
\multirow{2}{*}{QB} & Video & 8,682 & 65,216 & 539,193 & gender, \\
 & Article & 21,639 & 1,859 & 93,832 & age\\
\midrule
\multirow{2}{*}{QK} & Video & 643,446 & 126,555 & 10,579,159 & gender, \\
 & Article & 31,544 & 14,285 & 476,678 & age \\
\bottomrule
\end{tabular}
\end{table}

\subsection{Baseline}
As our approach is model-agnostic, we validate its effectiveness by integrating CDFA into the following representative CDR methods and comparing performance before and after applying CDFA:
\begin{itemize}
    \item \textbf{CMF}~\cite{SG08} unifies the factorization of multiple rating matrices by sharing latent factors for entities that appear in more than one domain.
    \item \textbf{CLFM}~\cite{GLC13} utilizes a cluster-level latent factor scheme to capture both common cross-domain signals and domain-specific nuances.
    \item \textbf{BiTGCF}~\cite{LLL20} exploits a graph-based collaborative filtering framework to model higher-order relations, with overlapping users serving as bridges for information flow across domains.
    \item \textbf{CoNet}~\cite{HZY18} employs neural cross-connections to transfer knowledge effectively between domains.
    \item \textbf{DTCDR}~\cite{zhu2019dtcdr} adopts a dual-target strategy, fusing rating and content information via multi-task learning to enhance both domains concurrently.
\end{itemize}
Furthermore, we conduct comparisons between CDFA and several representative fairness-aware CDR methods:
\begin{itemize}
    \item \textbf{FairCDR}~\cite{tang2024fairness} quantifies the instance-level influence on both utility and fairness, and adaptively reweights training examples accordingly. This approach enables cross-domain knowledge transfer that alleviates distributional biases in sparse data and reduces group disparities.
    \item \textbf{VUG}~\cite{chen2025leave} synthesizes virtual users in the source domain, allowing all target-domain users to benefit from cross-domain transfer, thereby promoting fairness through disparity reduction.
\end{itemize}

\subsection{Implementation Details}
All experiments are conducted in PyTorch on a single NVIDIA Tesla V100 (32GB) GPU. To ensure reproducibility, we build upon the RecBole CDR framework~\cite{ZMH21}, adopting its standard preprocessing pipelines, dataset splits, and evaluation protocols~\cite{ZMH21, ZHP22}. Specifically, we employ an 8:2 train/validation split for the source domain and an 8:1:1 train/validation/test split for the target domain, with splits maintained on a per-user basis. For optimization, we use the Adam optimizer~\cite{KB14} with a default learning rate of 0.001 and a mini-batch size of 2048. The $L_2$ regularization coefficient is set to $10^{-4}$. The $\mathrm{Estimator}(\cdot,\cdot)$ component is implemented as a three-layer MLP with hidden layers of sizes 128, 64, and 64, respectively, employing ReLU activation and a dropout rate of 0.2. We tune the $\epsilon$ and $\gamma$ over the range $[0, 1]$, and vary the size of the unlabeled data candidate set $|M|$ from 1 to 16. As the momentum coefficient $\beta$ is not central to our method design, we fix its value at 0.9 in all experiments, following standard practice in momentum-based optimization and exponential moving average smoothing~\cite{sutskever2013importance, he2016deep}. For other baseline hyperparameters, we adopt the adjustment ranges specified in the respective original papers. All hyperparameters are tuned via grid search on the validation sets, and we report the best performance achieved for each baseline.

\end{document}